\tikzset{snake it/.style={decorate, decoration=snake}}
\newtheorem{theorem}{Theorem}
\newtheorem{lemma}{Lemma}
\newtheorem{corollary}{Corollary}
\newtheorem{definition}{Definition}
\title{Herding Positive, Complex Networks}
\author{ Sebastian F. Ruf$^{1}$, Magnus Egerstedt$^{1}$, and Jeff S. Shamma$^{2}$
	\thanks{*This work was supported by funding from King Abdullah University of Science and Technology (KAUST).}
	\thanks{$^{1}$S.F. Ruf and M. Egerstedt are with the School of Electrical and Computer Engineering, Georgia Institute of Technology and can be reached at
		{\tt\small ruf@gatech.edu} and {\tt\small magnus@gatech.edu} resp.}%
	\thanks{$^{2}$ J.S. Shamma is with Computer, Electrical and Mathematical Science and Engineering Division, KAUST, Saudi Arabia, {\tt\small jeff.shamma@kaust.edu.sa}}%
}
\begin{document}
	\maketitle
	\thispagestyle{empty}
	\pagestyle{empty}
	\begin{abstract}
		The problem of controlling complex networks is of interest to disciplines ranging from biology to swarm robotics. However, controllability can be too strict a condition, failing to capture a range of desirable behaviors. Herdability, which describes the ability to drive a system to a specific set in the state space, was recently introduced as an alternative network control notion. This paper considers the application of herdability to the study of complex networks under the assumption that a positive system evolves on the network. The herdability of a class of networked systems is investigated and two problems related to ensuring system herdability are explored. The first is the input addition problem, which investigates which nodes in a network should receive inputs to ensure that the system is herdable. The second is a related problem of selecting the best single node from which to herd the network, in the case that a single node is guaranteed to make the system is herdable. In order to select the best herding node, a novel control energy based herdability centrality measure is introduced.
	\end{abstract}
\section{Introduction}\label{sec:intro}
 Controlling complex networks has long been of interest to the controls community \cite{siljak2011decentralized}
  and has recently received considerable attention from the complex networks community \cite{liu2015control}. Complete controllability is often used to describe the ability of a complex network to be controlled, however many systems do not require complete controllability for desired system behavior to be achieved. 
  This paper considers instead an alternative notion known as herdability, which describes systems which are not completely controllable but for which a class of desirable behaviors are still possible \cite{rufacc}. 
  
This paper considers how to apply input to a complex network to ensure that a system is herdable. Herdability is particularly applicable to understanding the behavior of complex networks. A system is completely herdable if all the elements of the state can be brought above a threshold by the application of a control input. Thresholds capture an important class of behavior in biological and social systems, in which a system reaches a tipping point and as a result the behavior of a system may change dramatically. Examples of behavior driven by thresholds include 
quorum sensing in bacteria \cite{miller2001quorum} and collective social action\cite{granovetter1978threshold,schelling1971dynamic}. 

Selecting nodes to ensure a system is herdable is an example of an input addition problem \cite{COMMAULT20133322}. Input addition problems have been previously considered in the case of controllability of complex networks. In multi-agent systems, this problem is referred to as the leader selection problem \cite{rahmani2009controllability,martini2010controllability}, which determines the controllability of a system following consensus dynamics based on a given selection of leader nodes. More broadly, the input addition problem has been considered when seeking to ensure system controllability of a system which does not necessarily follow consensus dynamics\cite{COMMAULT20133322,Liu2011,Olshevsky2014,pequito2016framework,tzoumas2016minimal}. 
The input addition problem has also been considered for the case of a structured system\cite{dion2003generic} by applying the notion of structural controllability\cite{Lin1974}. 
In the case of structured systems, it has been shown that input selection can be done efficiently \cite{Liu2011,pequito2016framework}.

In the case of known dynamics, finding the minimum number of state nodes to apply input to was shown to be NP-Hard\cite{Olshevsky2014}. A similar problem, that of selecting nodes to ensure reachability to a specific end point or subspace, was also found to be NP-hard\cite{Tzoumas2015}. In contrast to these results, this paper shows that in the case of a known, positive system it is possible to determine in linear time which nodes to apply input to in order to ensure the system is herdable. 

The second problem considered in this paper, that of characterizing nodes based on control energy has been also considered in the context of controllability. Selecting leader nodes while taking into account worst case control energy was considered in \cite{pasqualetti2014controllability}. A number of control energy centralities were introduced in \cite{summers2016submodularity}, some of which were extended to include considerations of robustness to noise in \cite{fitch2016optimal}. This paper also considers control energy; however herdability is a less stringent condition than controllability, which allows a broader array of networks to be made herdable from one node and a more comprehensive comparison to be made.
	 
	 The rest of the paper is organized as follows: In Section \ref{sec:def} the basic definitions of herdable systems are introduced. Section \ref{sec:sel} considers the problem of selecting nodes to ensure system herdability. In Section \ref{sec:cent}, a novel centrality measure is introduced to compare nodes in a herdable network. The paper concludes in Section \ref{sec:con}.
	 
	 \section{Herdable Systems} \label{sec:def}
	  A networked system can be described by its graph structure and the dynamics that act over the graph structure. Consider a graph  $\mathcal{G}=(\mathcal{V},\mathcal{E})$, where $\mathcal{E}$ is the directed edge set and $\mathcal{V}=\mathcal{V}_x\cup\mathcal{V}_u$, where $\mathcal{V}_x$ is the set of state nodes and $\mathcal{V}_u$ is the set of control nodes, which together satisfy $\mathcal{V}_x \cap \mathcal{V}_u = \emptyset$. Let $\|\mathcal{V}_u\|=m$ and $\|\mathcal{V}_x\|=n$, where $\|\cdotp\|$ denotes cardinality.  Each state node $v_{i}\in\mathcal{V}_x$ has an associated scalar state $x_{i}$ and each control node $\mu_{i}\in\mathcal{V}_u$ has a scalar input $u_{i}$. The interaction dynamics of the system are assumed to be linear: \begin{equation}\dot{x}=Ax+Bu,\end{equation} where $A\in\mathbb{R}^{n\times n}$, $B\in\mathbb{R}^{n\times m}$, $x=[x_{1},x_{2},\dots,x_{n}]^{T}$, and $u=[u_{1},u_{2}\dots,u_{n}]^{T}$. The structure of the system matrices $(A,B)$ of the linear system are derived from the underlying graph $\mathcal{G}$. In general a non-zero element of $A$, $a_{ij}$, corresponds to an edge from state node $v_{j}\in\mathcal{V}_x$ to state node $v_{i}\in\mathcal{V}_x$. Similarly a non-zero element of $B$, $b_{ij}$, corresponds to an edge from input node $u_{j}\in\mathcal{V}_u$ to state node $x_{i}\in\mathcal{V}_x$. 
	 
	 Herdability considers the general problem of going from any initial point in the state space, $x(0)$, to a terminal set.
	 Specifically, the terminal set, $\mathcal{H}_{d}$, is a shifted positive orthant, defined as $\mathcal{H}_d = \{ x \in \mathbb R^{n} : x_{i} \geq d\}$. The following definition characterizes the complete herdability of a system.
	 \begin{definition}
	 	A system is completely herdable if $\forall x(0)\in\mathbb{R}^{n}$ and $\forall d\geq0$ there exists a finite time $T$ and an input $u(t), \ t\in[0,T]$ s.t. $x(T)\in\mathcal{H}_{d}$ under control input $u(t)$.
	 \end{definition}

	 This paper makes the assumption that the dynamics evolving over the network correspond to consensus dynamics, which are an example of a positive system. 
 
	 The study of positive systems covers a large range of complex networks, including subject areas ranging from epidemic spread and, more generally, compartmental systems in biology to consensus in opinion dynamics and robotics \cite{Farina2011}.  
	 As shown in \cite{rufacc}, the herdability of a positive system can be characterized based on its underlying graph structure. 
	 
	 \begin{theorem}[Theorem 4 from \cite{rufacc}]
	 	A positive linear system is completely herdable if and only if it is input connectable, i.e. there is a path from an input to any state node in the underlying graph structure.
	 \end{theorem}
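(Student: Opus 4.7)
The plan is to prove the two implications separately, leveraging the correspondence between directed paths in the graph of a Metzler matrix and strictly positive entries of its matrix exponential. Throughout, the positivity assumption means $A$ is Metzler and $B$ is entrywise nonnegative, so $e^{At}$ is entrywise nonnegative for $t \geq 0$.

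For necessity, I would argue by contrapositive. Suppose the system is not input connectable and let $S \subseteq \mathcal{V}_x$ be the set of state nodes with no directed path from any input node. Any state with an in-neighbor outside $S$ would itself be reachable from an input, so $S$ is closed under in-neighbors, and the subsystem dynamics decouple as $\dot{x}_S = A_{SS} x_S$ with no $B$ contribution. By invariance of the nonpositive orthant under Metzler flows, starting from $x_S(0) = -\mathbf{1}$ gives $x_S(t) = e^{A_{SS} t} x_S(0) \leq 0$ for all $t \geq 0$. Choosing any $d > 0$ then shows $x(T) \notin \mathcal{H}_d$ for every $T$ and every input, contradicting complete herdability.

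For sufficiency, I would construct an explicit herding input. Assume input connectability and try the constant control $u(t) \equiv \alpha \mathbf{1}$ on $[0, T]$ for fixed $T > 0$ and scalar $\alpha > 0$. The terminal state is
\[ x(T) = e^{AT} x(0) + \alpha \, v, \qquad v := \int_0^T e^{A(T-s)} B \, \mathbf{1} \, ds. \]
Input connectability, combined with the path-to-positivity correspondence for Metzler exponentials, implies that for each state index $i$ there exists an input $j$ for which $[e^{A(T-s)} B]_{ij}$ is strictly positive on a subset of $[0,T]$ of positive measure, so every component $v_i$ is strictly positive. Given arbitrary $x(0) \in \mathbb{R}^n$ and target $d \geq 0$, pick $\alpha$ large enough that $\alpha v_i \geq d - [e^{AT} x(0)]_i$ for every $i$; then $x(T) \in \mathcal{H}_d$, so the system is completely herdable.

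The main obstacle is making rigorous the claim that the $(i,j)$ entry of $e^{At}B$ is strictly positive for $t > 0$ precisely when there is a directed path from input $\mu_j$ to state $v_i$. This is handled by choosing $\lambda \geq 0$ large enough that $A + \lambda I$ is entrywise nonnegative and expanding $e^{At} = e^{-\lambda t} \sum_{k \geq 0} \tfrac{t^k}{k!}(A + \lambda I)^k$; the $(i,j)$ entry of $(A + \lambda I)^k$ then sums weighted walks of length $k$ from $j$ to $i$, so strict positivity of this entry for some $k \leq n-1$ is equivalent to the existence of a directed path. Once this correspondence is in hand, both constructions above immediately yield the equivalence.
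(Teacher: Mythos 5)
This paper never actually proves the statement --- it is imported verbatim as Theorem~4 of \cite{rufacc} --- so there is no in-paper argument to compare yours against; I can only assess your proof on its own terms, and it is correct and complete. The contrapositive direction works: the set $S$ of states with no directed path from any input is closed under in-neighbors and receives no column of $B$, so $\dot{x}_S=A_{SS}x_S$ with $A_{SS}$ Metzler, and $x_S(0)=-\mathbf{1}$ remains in the nonpositive orthant for all time; since the definition quantifies over all $d\ge 0$, choosing any $d>0$ defeats herdability (note that $d=0$ alone would not, so this use of the quantifier is essential). The sufficiency direction is also sound: the $A+\lambda I$ series expansion gives the standard equivalence between strict positivity of $[e^{At}]_{ik}$ for $t>0$ and the existence of a directed path from $k$ to $i$ (or $k=i$), and composing with the nonzero entry of $B$ on the first edge of the path --- a one-line step you leave implicit but should state --- yields $[e^{At}B]_{ij}>0$ for $t>0$; continuity and nonnegativity of the integrand then make every $v_i>0$, and scaling the constant input $\alpha\mathbf{1}$ finishes the argument. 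For comparison, the development in \cite{rufacc} (which this paper leans on elsewhere, e.g.\ when it asserts that herdability means the reachable subspace $R(0)$ meets $\mathcal{H}_d$) goes through an algebraic characterization of complete herdability as the existence of an elementwise-positive vector in the reachable subspace, and then relates input connectability to that condition; your argument is more elementary, stays entirely at the level of the flow $e^{At}$, and has the added benefit of exhibiting an explicit constant, nonnegative herding input.
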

	 This paper considers the implication of the above Theorem for the application of herdability to a positive system. Note that a system is positive if the weights between nodes in a network are positive. As most complex network representations are either unweighted or have positive weights, it is reasonable to assume that the underlying dynamic is positive. 
	 
	 For a given linear system, the input connectability of the system can be checked by verifying that a directed spanning forest rooted at the control inputs covers all state nodes. Generating a directed spanning tree rooted at a node can be done in linear time via breadth-first or depth-first search\cite{kleinberg2006algorithm}, allowing system herdability to be checked efficiently.
	 
	 Let us consider in greater depth how herdability differs from controllability, specifically with regard to symmetry with respect to an input. The two major lines of work on the controllability of networks, that of structural controllability and consensus dynamics over networks, have identified symmetry with respect to a control input as an sufficient condition for the loss of controllability of a system \cite{Lin1974,rahmani2009controllability}. 

 Symmetric nodes must be controlled together, which violates the condition of complete controllability. 
 As herdability looks only at driving the state to be larger than some threshold, the herdability condition is satisfied even when the symmetric nodes are controlled to the same point. An illustrative case of symmetric systems is the star graph, shown in Fig. \ref{fig:comp}.  
	 The fact that symmetry degrades controllability explains why past analysis of controllability of complex networks has found that driver node selection avoids hubs\cite{Liu2011}. 
	 
 	 \begin{figure}
 	\begin{tabular}{cc}
 		\begin{tikzpicture}
 		\draw [-,red] (1.1768,0.8232) -- (1.8232,0.1768);
 		\draw [-] (1.1768,-0.8232) -- (1.8232,-0.1768);
 		\draw [-] (2.1768,0.1768) -- (2.8232,0.8232);
 		\draw [-,red] (2.1768,-0.1768) -- (2.8232,-0.8232);
 		\draw [->,color=red,snake it] (0,2) -- (.75,1.25);
 		\draw [->,color=blue,snake it] (0,0) -- (.75,-.75);
 		\draw [->,color=green,snake it] (2,2) -- (2.75,1.25);
 		\draw [->,draw opacity=0] (4,1) -- (0,.25);
 		\draw[red] (1,1) circle [radius=0.25];
 		\draw[blue] (1,-1) circle [radius=0.25];
 		\draw[red] (2,0) circle [radius=0.25];
 		\draw[green] (3,1) circle [radius=0.25];
 		\draw[red] (3,-1) circle [radius=0.25];
 	
 		\end{tikzpicture}&
 		\begin{tikzpicture}
 		\draw [-] (1.1768,0.8232) -- (1.8232,0.1768);
 		\draw [-] (1.1768,-0.8232) -- (1.8232,-0.1768);
 		\draw [-] (2.1768,0.1768) -- (2.8232,0.8232);
 		\draw [-] (2.1768,-0.1768) -- (2.8232,-0.8232);
 		\draw [->,color=black,snake it] (2,1.5) -- (2,.45);
 		\draw  (1,1) circle [radius=0.25];
 		\draw (1,-1) circle [radius=0.25];
 		\draw (2,0) circle [radius=0.25];
 		\draw (3,1) circle [radius=0.25];
 		\draw (3,-1) circle [radius=0.25];
 	
 		\end{tikzpicture}\\
 		Controllability & Herdability\\
 	\end{tabular}
 	\caption{Controllability analysis will select $3$ nodes in order to ensure controllability of the system. Herdability (under the assumption of a positive system) can select any node, including the middle node, as symmetry does not degrade the ability to herd the network.}
 	\label{fig:comp}
 \end{figure}
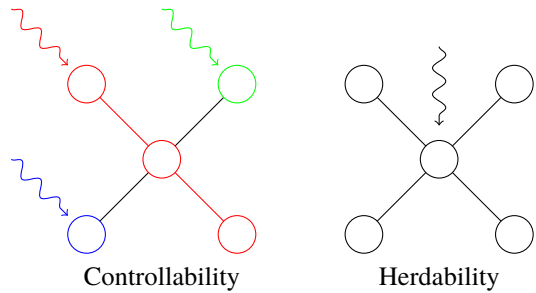

	 \section{Selecting Herding Nodes}\label{sec:sel}
	 The previous section considered the herdability of a given positive, linear system. It is often the case when interacting with networked systems that instead of being given an existing set of interconnections with input nodes, the problem is one of selecting the state nodes with which to interact to ensure the system is herdable, i.e. to design the $B$ matrix of the linear system. To this end, this section considers the input selection problem: how to select a minimal subset, $\mathbb{H}$, consisting of $N_{H}$ state nodes that ensures herdability of the system, where each element of $\mathbb{H}$ is called a herding node. Note that based on the desired terminal set, once $\mathbb{H}$ is identified, system herdability can be ensured with a $B\in\mathbb{R}^{n}$ which is $1$ at position $i$ if $i\in\mathbb{H}$ and $0$ else. 
	 
	 Consider now the problem of making a given system herdable by adding input to make a network input connectable. The solution to this problem will be called a Herding Cover. In order to generate a herding cover, the system must first be decomposed into strongly connected components (SCCs). This can be achieved in linear time by Kosaraju's algorithm\cite{kleinberg2006algorithm}. Once the strongly connected components are identified, a graph condensation is performed which generates a directed acyclic representation of the graph $\mathcal{G}$, represented as  $\mathcal{G}_{a}=(\mathcal{V}_{a},\mathcal{E}_{a})$. Each element of $\mathcal{V}_{a}$ represents a strongly connected component of $\mathcal{G}$ and an edge is in $\mathcal{E}_{a}$ if there is a link in $\mathcal{E}$ between any nodes in the respective SCCs\cite{harary2005structural}. Let $N_{r}$ be the number of roots of this acyclic representation. 
	 \begin{table*}[htbp]
	 	\centering
	 	\adjustbox{width=.87\textwidth}{
	 	\begin{tabular}{c l c c c c c c}
	 		Type &Name& N &L&Dir.& $n_{w}$ & $n_{H}$ & $n_{c}$ \\
	 		\hline

	 		Collaboration & Astro-Physics\cite{newman2001structure} & 16,706 &242,502&U& 1 & 0.062 &0.080\\
	 		&Condensed Matter Physics\cite{newman2001structure}& 16,726&95,188&U&1&0.071& 0.108 \\
	 		&Cond. Mat. Physics 2003\cite{newman2001structure} & 31,163&240,058&U&1&0.051& 0.090 \\
	 		&Cond. Mat. Physics 2005\cite{newman2001structure} & 40,421&351,384&U&1&0.045&0.083\\
	 		&High Energy Physics\cite{newman2001structure} & 8,361&31,502&U & 1 & 0.159& 0.208 \\
	 		&Network Science\cite{newman2006finding}& 1,589 &5,484&U& 1 &0.249&0.260\\
	 		&Jazz\cite{gleiser2003community}  &198 &5,484&U& 1 & 0.005&0.005\\
	 		&General Relativity\cite{leskovec2007graph}&26,196&28,980&U&1&0.813&0.827\\
	 		&&&&&&&\\
	 		Biological&C. Elegans Neural \cite{watts1998collective}&306 &2,345&D& 3.7 & 0.121 &0.190\\ 
	 		&Protein Interaction\cite{jeong2001lethality}&2,114&4,480&U&1&0.197&0.462\\
	 		&Dolphin Social \cite{lusseau2003bottlenose}& 62 &318&U& 1 & 0.016& 0.032 \\
	 		&&&&&&&\\
	 		Infrastructure&Western US Power Grid \cite{watts1998collective}& 4,941 &13,188&U& 1 & 0.0002& 0.116\\
	 		& Top Airports\cite{colizza2007reaction}&500&5960&U&1&0.002&0.250\\
	 		&Football Games\cite{girvan2002community}  & 115 &1,226&U& 1 & 0.009& 0.009\\
	 		&&&&&&&\\
	 		Online&UCIonline\cite{opsahl2009clustering}& 1,899&20,296&D & 138 & 0.291& 0.323 \\
	 		&Political Blogs\cite{adamic2005political}& 1,490&19,025&D &1.89&0.340&0.471\\
	 		&&&&&&&\\
	 		Friendship&Third Grade\cite{parker1993friendship}& 22&177&D & 1 & 0.046& 0.046\\
	 		&Fourth Grade\cite{parker1993friendship}& 24&161&D & 1 & 0.042&0.042\\
	 		&Fifth Grade\cite{parker1993friendship}&22 &103&D& 1 & 0.046&0.046\\
	 		&Highschool\cite{coleman1964introduction}&73&243&D&2&0.137&0.178\\
	 		&Fraternity\cite{bernard1979informant}&58&1,934&U&1&0.017&0.017\\
	 		&EIES 1\cite{freeman1979networkers}&32&650&D & 1 & 0.031& 0.031\\
	 		&EIES 2\cite{freeman1979networkers}&32 &759&D& 1 & 0.031& 0.031\\
	 		&Mine\cite{kapferer1969norms}&15 &88&U& 1 & 0.067 &0.067\\
	 	\end{tabular}}
	 	\caption{For each network, the table shows the number of nodes $N$, the number of edges $L$, whether the network is Undirected or Directed, the ratio of number of herding nodes to number of weakly connected components $n_{w}=\frac{N_{H}}{N_{w}}$, the fraction of herding nodes $n_{H}=\frac{N_{H}}{N}$, the fraction of driver nodes $n_{c}=\frac{N_{c}}{N}$.}
	 	\label{tab:nc_compm}
	 \end{table*}
	 
	 \begin{theorem}
	 It holds that $$N_{H}=N_{r}.$$ As such, $N_{H}$ can be determined in linear time.
	 \end{theorem}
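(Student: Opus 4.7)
The plan is to prove the equality $N_H = N_r$ by establishing matching upper and lower bounds, leveraging the characterization of complete herdability as input connectability (Theorem 4 from \cite{rufacc}). So the task reduces to: find the minimum number of state nodes that must serve as input-receiving herding nodes so that every state node is reachable via a directed path from at least one herding node in $\mathcal{G}$.

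For the upper bound $N_H \le N_r$, I would select one state node from each of the $N_r$ root SCCs of the condensation $\mathcal{G}_a$. To verify this is a valid herding cover, I use the standard fact about DAGs that every vertex is reachable from at least one source (root): given any state node $v \in \mathcal{V}_x$, let $S_v$ be the SCC of $\mathcal{G}_a$ containing $v$; walking backwards in $\mathcal{G}_a$ from $S_v$ along in-edges must terminate at some root SCC $S_r$ (since $\mathcal{G}_a$ is a finite DAG). Concatenating the corresponding inter-SCC path in $\mathcal{G}$ with an intra-SCC path (which exists because SCCs are strongly connected) yields a directed path from the chosen herding node in $S_r$ to $v$. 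Hence the system is input connectable with $N_r$ herding nodes.

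For the lower bound $N_H \ge N_r$, I would argue that each root SCC must contain at least one herding node. By definition, a root SCC $S$ of $\mathcal{G}_a$ has no incoming edges from other SCCs, so in $\mathcal{G}$ no directed edge enters a node of $S$ from a node outside $S$. Consequently, any directed path in $\mathcal{G}$ ending at a node of $S$ must begin and remain within $S$. If no herding node were placed in $S$, the nodes of $S$ would be unreachable from the input set, contradicting input connectability and hence complete herdability. Since the root SCCs are pairwise disjoint, each contributes a distinct herding node, giving $N_H \ge N_r$.

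Combining the two bounds yields $N_H = N_r$. The complexity claim then follows because the SCC decomposition and condensation can be performed in $O(|\mathcal{V}| + |\mathcal{E}|)$ time via Kosaraju's algorithm \cite{kleinberg2006algorithm}, and identifying the roots of $\mathcal{G}_a$ (those vertices with in-degree zero) is a single linear pass through $\mathcal{E}_a$. I do not anticipate a substantive obstacle here; the only subtlety is to be explicit that \emph{root} SCCs, not arbitrary source vertices in $\mathcal{G}$, are the correct combinatorial object, since isolated source nodes in $\mathcal{G}$ and general sources inside non-root SCCs do not govern reachability, whereas root SCCs precisely characterize the minimal set of ``unreachable-from-elsewhere'' regions that an input must penetrate.
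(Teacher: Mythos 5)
Your proof is correct and follows essentially the same route as the paper: decompose $\mathcal{G}$ into SCCs, condense to a DAG, and place one herding node in each root SCC, invoking input connectability as the criterion for complete herdability. In fact your argument is somewhat more complete than the paper's own proof, which only verifies that $N_r$ herding nodes suffice (the upper bound) and leaves implicit the necessity half --- your observation that a root SCC has no in-edges from outside and therefore must itself contain a herding node, with disjointness of the root SCCs giving $N_H \geq N_r$.
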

	 \begin{proof}
	 	Consider the acyclic representation of the system graph. Each root node of this graph represents a SCC of the graph that has no in-bound edges from other SCCs. By applying input to one node from each such SCC, then by the definition of strong connectivity the entire SCC is herdable as well as all nodes downstream from the given SCC. As input is applied to all roots, the entire system is herdable. 
	 	
	 	This spanning forest representation can be computed in linear time with respect to the original network size. The roots of this forest representation can found in linear time, by checking each node in $\mathcal{N}_{a}$ to find the node with zero in-degree.
	 \end{proof}

	 \begin{corollary}\label{lem:unbound}
	 	If the graph is undirected or consists of disjoint strongly connected components, then \begin{equation}N_{H}=N_{w},\end{equation} where $N_{w}$ is the number of weakly connected components. 
	 \end{corollary}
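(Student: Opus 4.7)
The plan is to derive the corollary directly from the preceding theorem, which establishes $N_H = N_r$, the number of roots in the acyclic condensation $\mathcal{G}_a$. The key observation is that in both hypothesized cases, the condensation degenerates into a collection of isolated nodes, one per weakly connected component, and every isolated node trivially has in-degree zero and hence counts as a root.

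First I would handle the undirected case. If $\mathcal{G}$ is undirected, then every weakly connected component is itself strongly connected, because an undirected path between two vertices yields directed paths in both directions. Thus each SCC of $\mathcal{G}$ coincides with a weakly connected component, so $\mathcal{V}_a$ has exactly $N_w$ elements. Moreover, since there are no edges in $\mathcal{E}$ between distinct weakly connected components, there can be no edges in $\mathcal{E}_a$ between the corresponding nodes of the condensation. Hence $\mathcal{G}_a$ consists of $N_w$ isolated vertices, each with zero in-degree, and so $N_r = N_w$.

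Next I would dispatch the second case, in which $\mathcal{G}$ consists of disjoint strongly connected components. Here the SCCs of $\mathcal{G}$ are the components themselves, and by the disjointness hypothesis there are no edges between them, so again each SCC is simultaneously a weakly connected component. The same argument as above shows $\mathcal{G}_a$ is a set of $N_w$ isolated vertices, each a root, giving $N_r = N_w$.

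Combining either case with the preceding theorem yields $N_H = N_r = N_w$. The only subtle point to be careful about is making explicit why no cross-component edges can exist in $\mathcal{E}_a$ (this follows immediately from the definition of weak connectivity, so no real obstacle arises); the remainder is bookkeeping on the condensation.
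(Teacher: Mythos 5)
Your proof is correct and follows the same route the paper intends: the corollary is stated without proof as an immediate consequence of the preceding theorem ($N_H = N_r$), and your observation that in both cases the condensation collapses to $N_w$ isolated root vertices is exactly the reasoning that makes it immediate. No gaps.
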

 
 	 \begin{corollary}\label{th:one}
 	If the directed graph $\mathcal{G}$ is strongly connected, then any one node set forms the root of a Herding Cover.
 \end{corollary}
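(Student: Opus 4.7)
The plan is to derive this corollary directly from the preceding Theorem together with the characterization of complete herdability in terms of input connectability (Theorem 4 from \cite{rufacc}). The key observation is that strong connectivity collapses the entire graph into a single node under the condensation procedure described before the Theorem, which immediately pins down both $N_H$ and the structure of a valid Herding Cover.

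First, I would form the acyclic condensation $\mathcal{G}_a = (\mathcal{V}_a, \mathcal{E}_a)$ of $\mathcal{G}$. Because $\mathcal{G}$ is strongly connected by hypothesis, there is exactly one strongly connected component, so $|\mathcal{V}_a| = 1$ and $\mathcal{E}_a = \emptyset$. The unique vertex of $\mathcal{V}_a$ trivially has zero in-degree, so it is a root of the condensation, giving $N_r = 1$. Applying the preceding Theorem then yields $N_H = N_r = 1$, meaning a Herding Cover of size one exists.

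Next, I would argue that the choice of which node to place in this one-element cover is arbitrary. Let $v \in \mathcal{V}_x$ be any state node, and consider the $B$ matrix that has a single nonzero entry in the row indexed by $v$. By strong connectivity of $\mathcal{G}$, for every other state node $w \in \mathcal{V}_x$ there is a directed path in $\mathcal{G}$ from $v$ to $w$. Prepending the edge from the input node to $v$ gives a directed path from the input to $w$. Thus every state node is reachable from the input, so the system is input connectable. By Theorem 4 from \cite{rufacc}, the positive linear system is therefore completely herdable, confirming that $\{v\}$ forms a valid Herding Cover.

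There is no real obstacle here; the result is essentially a specialization of the Theorem combined with the defining property of strong connectivity. The only thing to be careful about is making the correspondence between ``root of the condensation'' and ``any node in the SCC'' explicit, since the Theorem's proof selects one node from each root SCC but does not specify which — strong connectivity makes every choice within the single SCC equally valid as a herding node.
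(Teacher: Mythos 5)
Your argument is correct and follows exactly the route the paper intends: the paper gives no separate proof for this corollary because it is immediate from the preceding Theorem, whose own proof already observes that applying input to one node of a root SCC herds that entire SCC (and everything downstream) by strong connectivity. Your explicit appeal to input connectability and Theorem 4 of \cite{rufacc} just spells out that same reasoning, so there is nothing genuinely different here.
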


	 	 Table \ref{tab:nc_compm} shows results for analysis of the fraction of herding nodes, $n_{H}$, compared with the fraction of driver nodes, $n_{c}$, from the controllability analysis of \cite{Liu2011}. Across all considered networks $n_{H} \leq n_{c}$. In $15$ of the $24$ networks, herdability requires communication with fewer nodes than controlling the network as $n_{H}< n_{c}$. There are some networks, such as the Western US Power Grid, where $n_{H} << n_{c}$. These networks consist of a single SCC, which can be made herdable with one herding node as shown in Corollary \ref{th:one}. 
	 	 \begin{figure*}[h]
	 	\centering
	 	\begin{tabular}{cc}
	 		\begin{tikzpicture}
	 		\draw [-] (1.1768,0.8232) -- (1.8232,0.1768);
	 		\draw [-] (1.1768,-0.8232) -- (1.8232,-0.1768);
	 		\draw [-] (2.1768,0.1768) -- (2.8232,0.8232);
	 		\draw [-] (2.1768,-0.1768) -- (2.8232,-0.8232);
	 		\draw  (1,1) circle [radius=0.25];
	 		\draw (1,-1) circle [radius=0.25];
	 		\draw (2,0) circle [radius=0.25];
	 		\draw (3,1) circle [radius=0.25];
	 		\draw (3,-1) circle [radius=0.25];
	 		\node [above] at (1,0.25) {\footnotesize 0.606};
	 		\node [above] at (2,0.25) {\footnotesize 1};
	 		\node [above] at (3,0.25) {\footnotesize 0.606};
	 		\node [above] at (1,-0.75) {\footnotesize 0.606};
	 		\node [above] at (3,-0.75) {\footnotesize 0.606};
	 		\end{tikzpicture}
	 		&
	 		\includegraphics[width=0.5\linewidth]{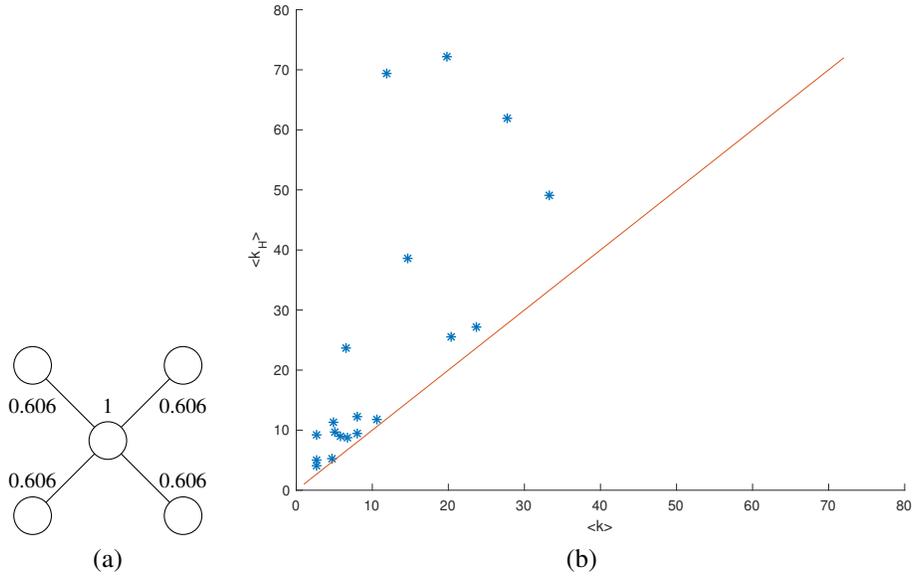}\\
	 	
	 		(a) & (b) \\
	 	\end{tabular}
	 	\caption{Herdability Centrality and Hubs: (a) Herdability centrality of a hub. (b) Plot of average degree of the complete network vs average degree of the top $10\%$ most herdable nodes, with a line representing average network degree. 
	 	}\label{fig:hubs}
	 \end{figure*}
	 \section{Herdability Centrality} \label{sec:cent}
	 
	 If the system is herdable from any one node, a secondary issue arises of selecting which one node to use as the herding node. To select between nodes in a SCC, a new herdability centrality measure is proposed which takes into account the energy required to drive the system into the set $\mathcal{H}_{d}$. 
	 While many networks are not necessarily strongly connected, as mentioned previously any directed graph can be broken down into a non-overlapping set of SCCs in linear time; allowing each SCC to be considered individually to determine the herdability centrality.
	 
	 Consider the problem of entering $\mathcal{H}_{d}$ from the origin with minimal control energy: 
	 \begin{equation}\label{eq:opt}
	 \begin{aligned}
	 J(B,d)=\underset{u(t)}{\min} &  \int_{0}^{t_{f}}\|{u(\tau)}\|^{2} d\tau \\
	 \textrm{s.t.}\  & \dot{x}(t)=Ax(t)+Bu(t), \ t\in[0,t_{f}] \\
	 &  x(t_{f})\in\mathcal{H}_{d} \\
	 & x(0)=0_{n},\\
	 \end{aligned}
	 \end{equation} 
	 where the minimum energy, $J$, is parameterized by the structure of the interaction with the control inputs, given in the matrix $B$, and by $d>0$ which is assumed to be fixed. 
	 
		 The formulation in Eq. \eqref{eq:opt} can be contrasted with the minimum energy optimal control problem as typically studied, i.e. in the context of completely controllable systems. Specifically the desired end position of the system is typically a desired final point $x_{f}$ instead of the set $\mathcal{H}$. In general, for systems that are not completely controllable, there is no guarantee that a desired $x_{f}$ or even $\mathcal{H}$ can be reached. However if the system is herdable, then by definition the reachable subspace from $0_{n}$, $R(0)$ intersects the set $\mathcal{H}_{d}$. 
		
		\begin{lemma}
			If the system is herdable, then the minimum energy to reach $\mathcal{H}_{d}$ is of the form 
			\begin{equation}
			x_{f}^{T}W_{c}^{+}x_{f},
			\end{equation} where $x_{f}\in\mathcal{H}_{d}\cap R(0)$, and $W_{c}^{+}$ is the Moore-Penrose pseudo-inverse of the controllability grammian: \begin{equation}W_c=\int_{0}^{t_f}e^{A\tau}BB^{T}e^{A^{T}\tau}d\tau.\end{equation} 
	 \end{lemma}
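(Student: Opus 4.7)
The plan is to separate Problem~\eqref{eq:opt} into an inner problem of minimum-energy transfer to a fixed terminal state $x_f$ and an outer minimization over admissible terminal states. Because the system is herdable, the theorem earlier in the paper guarantees that $R(0)\cap\mathcal{H}_d\neq\emptyset$, so the problem is feasible and any optimal trajectory must terminate at some $x_f$ in that intersection. Recalling the standard fact that for a linear system starting at the origin the reachable set at time $t_f$ coincides with $\mathrm{Im}(W_c)$, the admissible terminal states are exactly those $x_f\in\mathrm{Im}(W_c)\cap\mathcal{H}_d$, which is precisely $R(0)\cap\mathcal{H}_d$.

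For each fixed admissible $x_f$, I would show that the candidate control $u^{\star}(t)=B^{T}e^{A^{T}(t_f-t)}W_c^{+}x_f$ is both feasible and optimal. Feasibility amounts to verifying
\begin{equation}
\int_{0}^{t_f} e^{A(t_f-\tau)}Bu^{\star}(\tau)\,d\tau \;=\; W_c W_c^{+} x_f \;=\; x_f,
\end{equation}
which holds because $W_c W_c^{+}$ is the orthogonal projector onto $\mathrm{Im}(W_c)$ and $x_f$ lies in that image. Direct substitution into the cost then yields $\|u^{\star}\|_{L^2}^2 = x_f^{T} W_c^{+} x_f$. Optimality follows by the classical $L^2$ orthogonality argument: any other feasible control can be decomposed as $u=u^{\star}+v$ with $v$ producing zero terminal displacement, and one checks that $\langle u^{\star},v\rangle_{L^2}=0$ because $u^{\star}$ lies in the range of the adjoint of the reachability map while $v$ lies in its kernel; hence $\|u\|_{L^2}^2=\|u^{\star}\|_{L^2}^2+\|v\|_{L^2}^2$, so $u^{\star}$ is minimal.

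The main obstacle is the careful bookkeeping with the pseudo-inverse, since herdable-but-not-controllable systems generically produce a singular $W_c$. The identities $W_c W_c^{+} x_f = x_f$ and $W_c^{+}W_c W_c^{+}=W_c^{+}$ that drive the computation are valid only on $\mathrm{Im}(W_c)$, so it is essential that the minimization is explicitly restricted to $x_f\in R(0)\cap\mathcal{H}_d$; the statement of the lemma encodes this restriction through the constraint $x_f\in R(0)$. Once this subtlety is handled, the form $x_f^{T}W_c^{+}x_f$ for the minimum energy drops out, with the outer minimization over $x_f\in R(0)\cap\mathcal{H}_d$ then supplying the centrality measure used in the remainder of the section.
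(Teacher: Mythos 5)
Your proposal is correct and follows essentially the same route as the paper: both reduce to the classical minimum-energy transfer to a fixed reachable $x_{f}\in R(0)\cap\mathcal{H}_{d}$, take a control of the form $B^{T}e^{A^{T}(t_{f}-t)}p$ with $W_{c}p=x_{f}$, and use the Moore--Penrose pseudo-inverse to pick the minimum-norm solution, yielding the cost $x_{f}^{T}W_{c}^{+}x_{f}$. The only difference is one of completeness: you explicitly verify feasibility via the projector $W_{c}W_{c}^{+}$ and prove optimality with the $L^{2}$ orthogonality decomposition, steps the paper merely asserts by citing the standard form of the minimum-energy control.
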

 \begin{proof}

	 If the network is herdable then $\exists x_{f}\in\mathcal{H}_{d}\cap R(0)$. This reachable $x_{f}$ allows the use of a number of properties of the controllability grammian. To reach $\forall x_{f}\in R(0) \cap \mathcal{H}_{d}$  requires an input $u(t)$ that satisfies $\int_{0}^{t}e^{A(t-\tau)}Bu(\tau)d\tau=x_{f}$. This $u(t)$ will have the form $u(t)=B^{T}e^{At}p$ where $W_{c}p=x_{f}$. There exists a solution to $W_{c}p=x_{f}$ as $R(0)=\mathrm{range}(W_{c})$ i.e. that $x_{f}\in\mathrm{range}(W_{c})$. These solutions are of the form \begin{equation} p^{\ast}=W^{+}_{c}x_{f}+[I-W^{+}_{c}W_{c}]x_{f}\end{equation} with $p^{\ast}=W^{+}_{c}x_{f}$ as the unique solution in the range of $W_{c}$, where $W^{+}_{c}$ can here refer to any generalized inverse\cite{james1978generalised}. If $W^{+}_{c}$ refers specifically to the Moore Penrose Inverse (or any generalized reflexive inverse) the form of the minimum energy to reach $x_{f}$ is $x_{f}W^{+}_{c}x_{f}$.  
	 \end{proof}
 
	 With the analytical expression for the minimum energy to reach $x_{f}$, it is possible to reframe Eq. \eqref{eq:opt} as the problem of choosing the optimal $x_{f}$ in the set $\mathcal{H}_{d}\cap R(0)$: 
	 
	 \begin{equation}
	 \begin{split}
	 \min_{x_{f}}\ &  x_{f}^{T}W_{c}^{+}x_{f} \\
	 \textrm{s.t.}\  &  x_f \geq d \\
	 &  x_{f} \in R(0)\\
	 & x(0)=0_{n}.\\
	 \end{split}
	 \label{eq:2}
	 \end{equation}
	 
	 Here the problem can once again be simplified further based on properties of the controllability grammian. As $W_c$ is a symmetric, real matrix, the eigenvectors of $W_c$ are mutually orthogonal and the eigenvectors with non-zero eigenvalues span the range of $W_c$ \cite{horn2012matrix}. When $\mathrm{rank}(W_{c})=r\leq n$ there are $r$ eigenvectors $\left \{v_{1},\dots,v_{r}\right\}$ associated with the $r$ non-zero eigenvalues $\lambda_{1},\dots,\lambda_{r}$ which form an orthonormal basis for $\mathrm{range}(W_{c})$. Therefore as $x_{f}\in \mathrm{range}(W_{c})$  
	 \begin{equation}
	 \label{eq:xval}
	 x_{f}=\sum_{i=1}^{r}\alpha_{i}v_{i}.
	 \end{equation} Using that $v_{i}$ are orthonormal and also eigenvectors of $W_{c}^{+}$ with associated eigenvalues $\frac{1}{\lambda_{i}}$, substituting in Eq. \eqref{eq:xval} gives
 
	 \begin{equation}
	 \begin{split}
	 \min_{\alpha}\ &  \sum_{i=1}^{r} \frac{\alpha_{i}^{2}}{\lambda_{i}}  \\
	 \textrm{s.t.}\  &  V\alpha \geq d, \\
	 \end{split}
	 \label{eq:3}
	 \end{equation}
	 where $V=\left [v_{1} \dots v_{r}\right].$ The problem in Eq.~\eqref{eq:3} can be more efficiently solved than that in Eq.~\eqref{eq:opt}, allowing larger networks to be analyzed.

	 \subsection{Calculating Herdability Centrality}
	 	 
	 \begin{figure*}[hbtp]
	 	\hspace{-1cm}
	 	\centering
	 	\begin{tabular}{ccc}
	 		\includegraphics[width=0.31\linewidth]{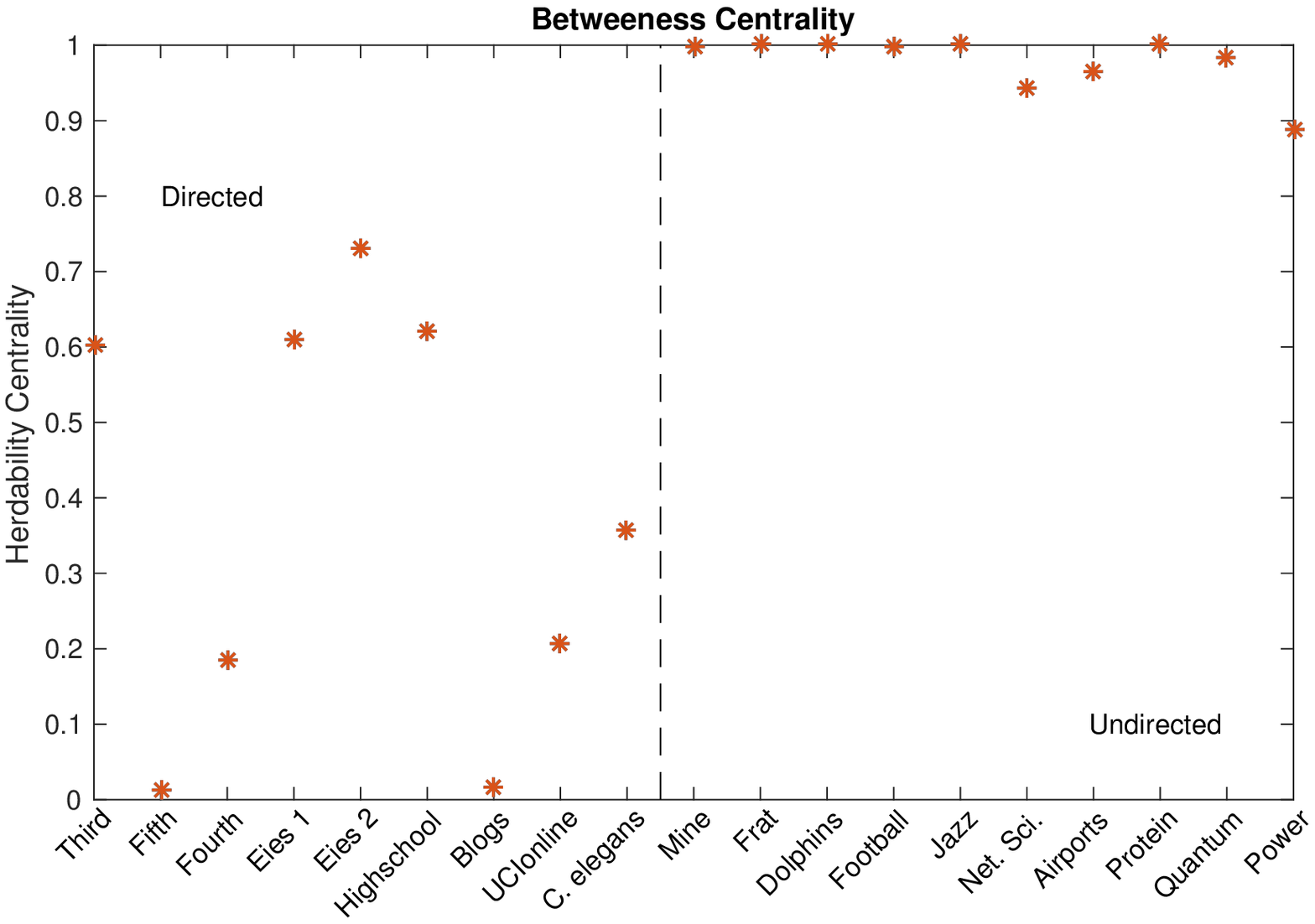}&
	 		\includegraphics[width=0.31\linewidth]{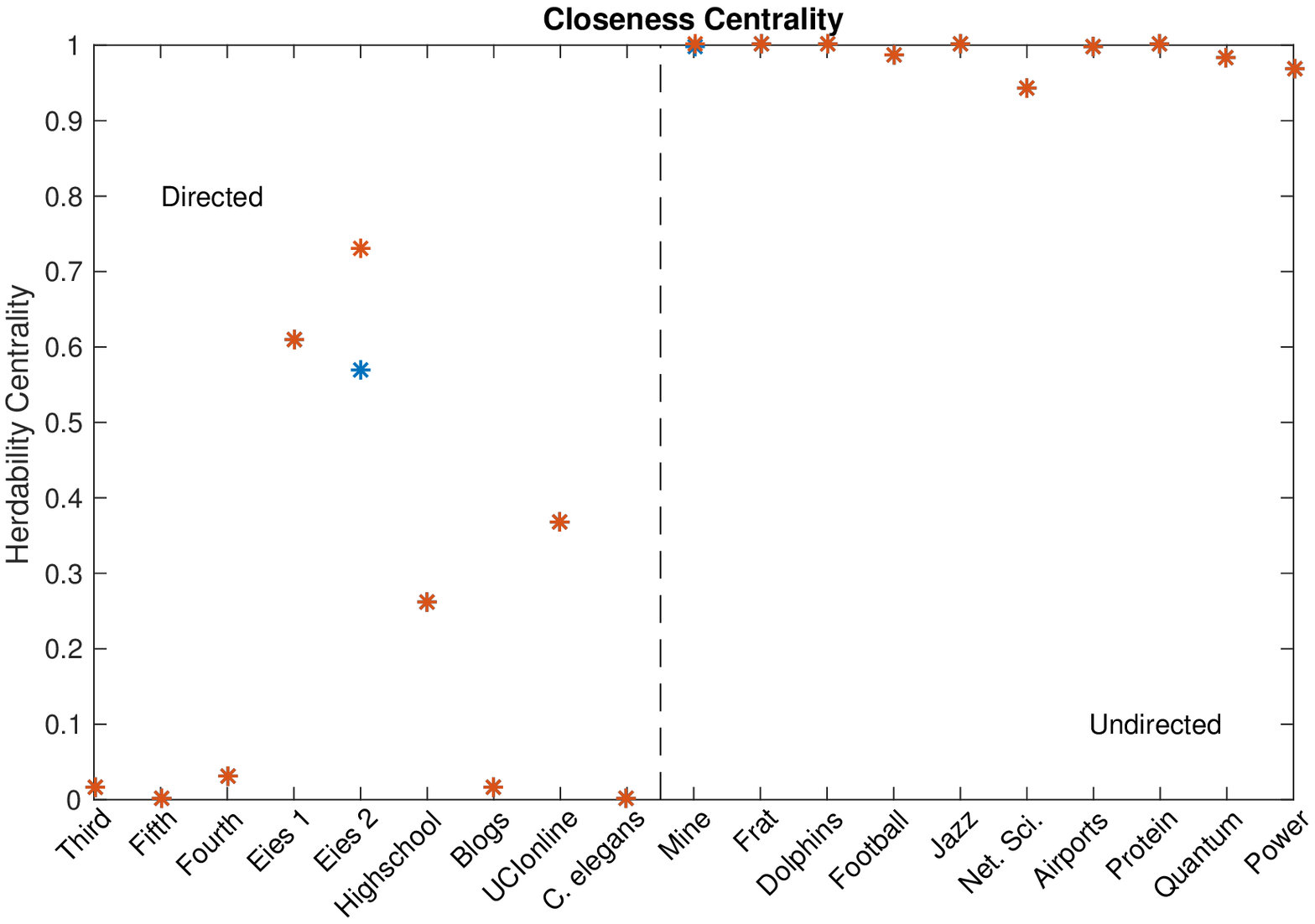}&
	 		\includegraphics[width=0.31\linewidth]{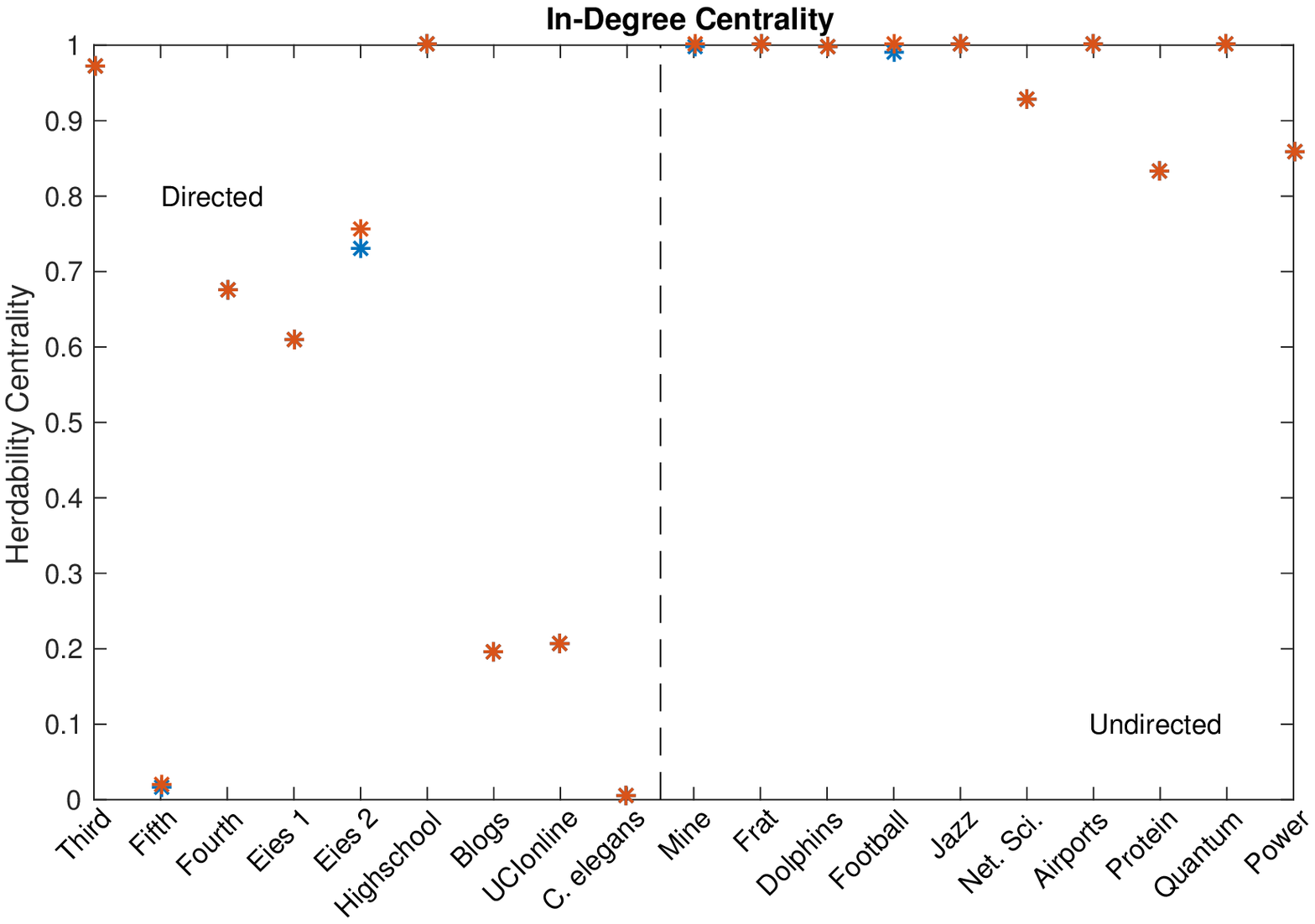}\\
	 		\includegraphics[width=0.31\linewidth]{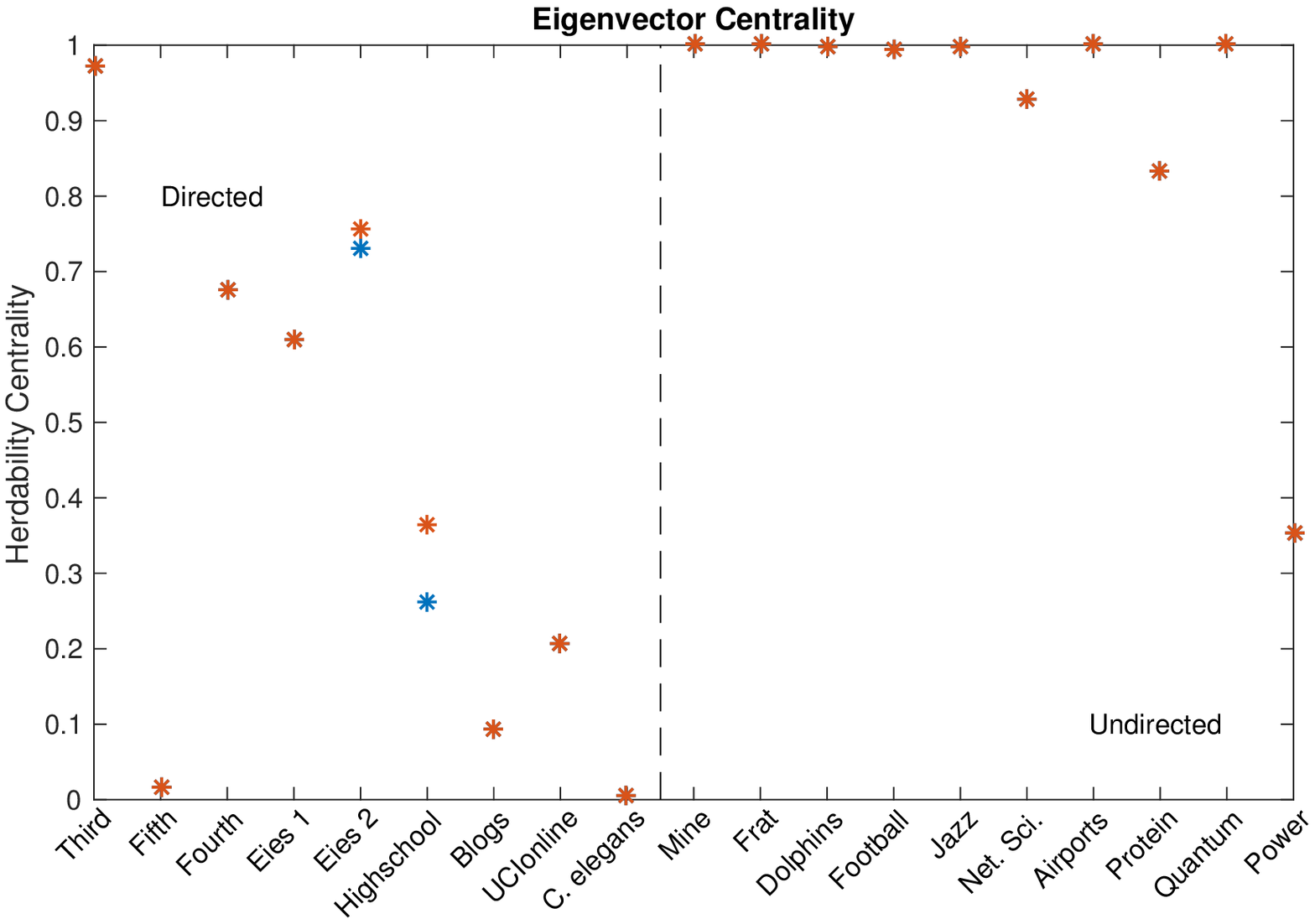}&
	 		\includegraphics[width=0.31\linewidth]{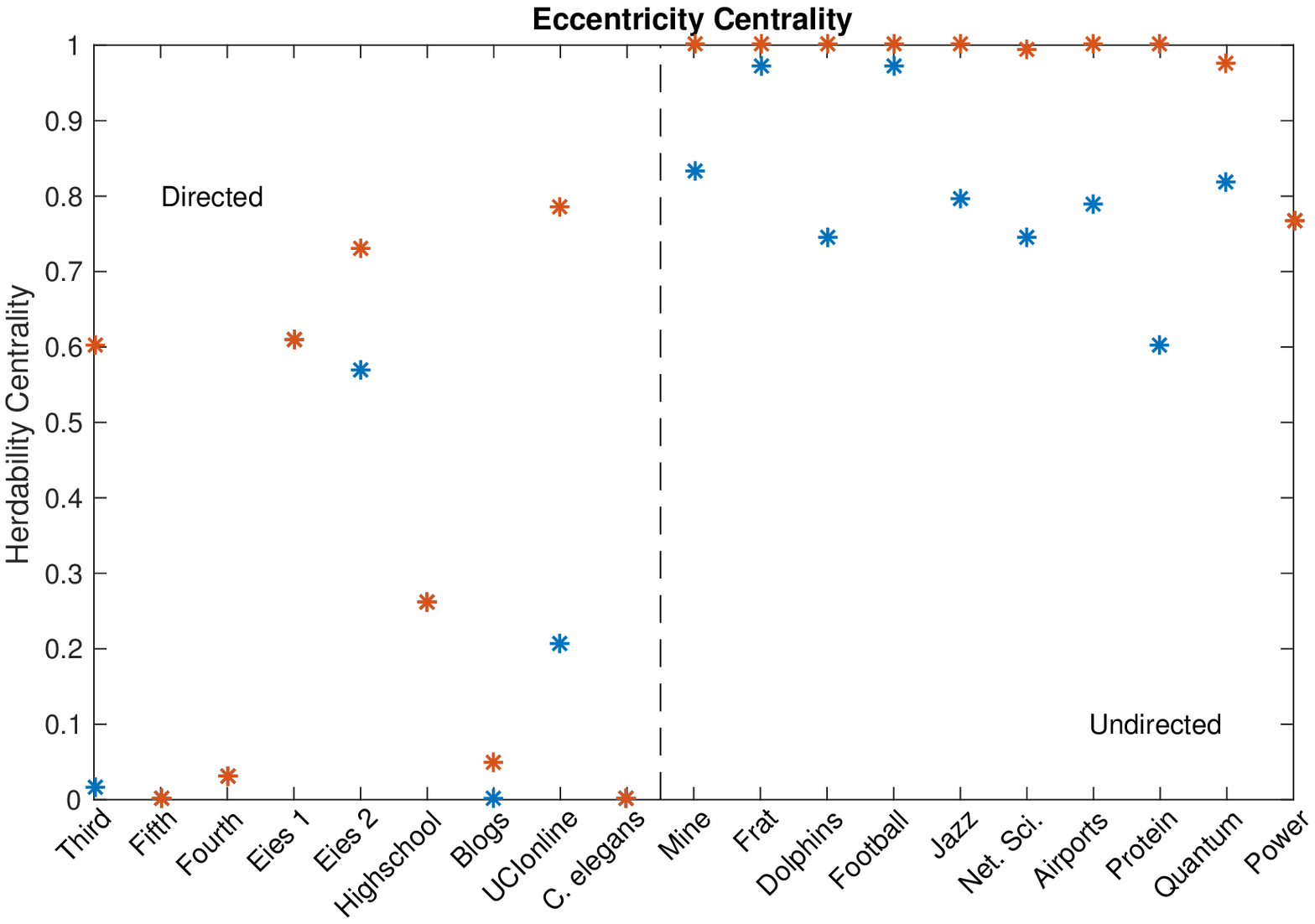}&
	 		\includegraphics[width=0.31\linewidth]{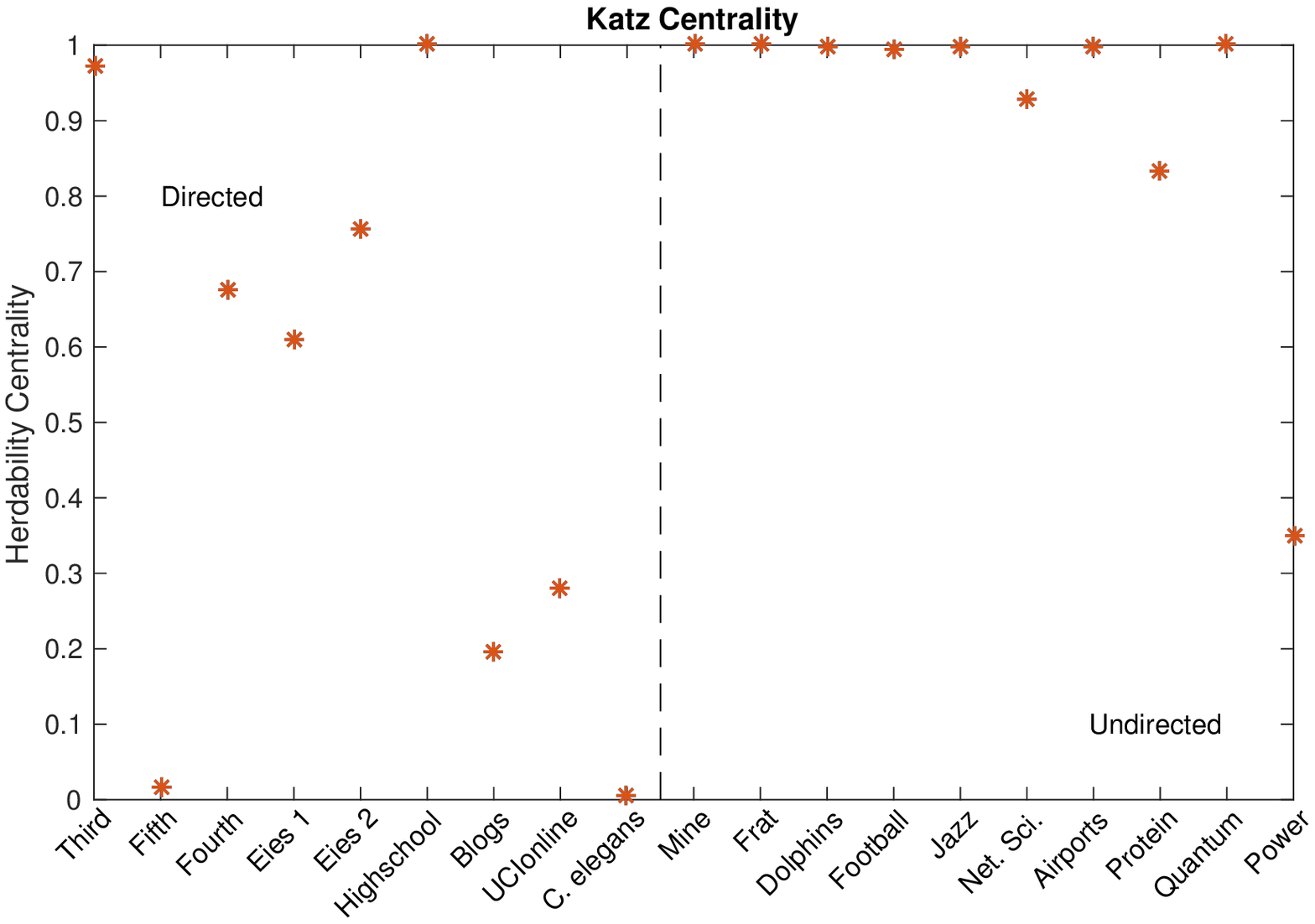}
	 		\\
	 	\end{tabular}
	 	\caption{Selecting the Highest Herdability Node: Each subgraph considers a different centrality measure and shows the highest (in red) and lowest (in blue if present) herdability centrality of the node(s) identified as having the highest value for each respective centrality. 
	 		Within each categorization (directed or undirected) the networks are organized starting with the smallest network on the left. In all undirected networks, all calculated centrality measures have high herdability centrality. In some directed networks, In-Degree and Katz centrality identify high herdability nodes. 
	 	}\label{fig:bestworst}
	 \end{figure*}
	With a simplified version of the minimum energy optimal control problem in hand, it is possible to move on to calculating herdability centrality. Each state node of the herdable system is considered in turn as the sole input node allowing the calculation of $J_{i}=J(e_{i},d)$, where $e_{i}\in\mathbb{R}^{n}$ is $1$ at position $i$ and $0$ elsewhere, and $d>0$ is fixed. The quantity $J_{i}$ is the minimum energy to reach $\mathcal{H}$ using only node $i$ as control input. In order to compare the minimum energy across nodes, the herdability centrality for node $i$, $Hc_{i}$,  is defined as \begin{equation}
	 {Hc}_{i}=\frac{\underset{k}{\min}\{J_{k}\}}{J_{i}}.
	 \end{equation}
	 Herdability centrality is normalized to be between $0$ and $1$. As reaching $\mathcal{H}$ with minimum energy is the chosen metric when interacting with these networks, the node(s) with minimum energy to reach $\mathcal{H}$ across all nodes will have the highest herdability centrality.  
	 
	 For the purpose of calculating herdability centrality of existing complex networks, the largest SCC of each considered network is used as the underlying interaction topology. The dynamics are assumed to be a modification of consensus dynamics, related to the opinion dynamic model of Taylor, which captures the effect of an external source of information on the opinion of an agent \cite{taylor1968towards}. When node $i$ is the sole herding node, the consensus dynamics are as follows:	\begin{equation}
	 \begin{split}
	 \dot{x}_{j}(t)&=\sum_{z\in\mathcal{N}_j} (x_z(t)-x_j(t)), \ \ \forall j\neq i \\
	 \dot{x}_{i}(t)&=\sum_{k\in\mathcal{N}_i} (x_k(t)-x_i(t))+u(t)-x_{i}(t),\\
	 \end{split}
	 \end{equation}
	 where $\mathcal{N}_i$ is the set of nodes with edges entering node $i$. 
	 In order to improve efficiency of the calculation, the final time is taken to be $t_{f}=\infty$ as the infinite horizon controllability grammian can be solved for efficiently, if $A$ is stable, as the solution to the continuous time Lyapunov equation $AW_c + W_c A+ BB^{T}=0.$ Note that while consensus does not normally provide a stable $A$, the model above does.

	 As mentioned previously, herdability allows hubs to be selected to herd complex systems, though it is not known a priori that hubs will indeed be selected. Fig. \ref{fig:hubs}(a) shows that the center node of the hub has the highest herdability centrality, and therefore requires the least energy to reach $\mathcal{H}_{d}$. Fig. \ref{fig:hubs}(b) shows that the introduced herdability centrality tends to select nodes that have higher than average degree, i.e. that herdability centrality tends to select hubs.

	 \subsection{Comparison to Other Centrality Measures}
	 
	 Given that herdability centrality tends to select high degree nodes, the question becomes whether it is possible to forgo the computationally expensive herdability centrality calculation in favor of an inexpensive graph structure based calculation. 
	 Table \ref{tab:cent} introduces a number of centrality measures which will be compared against herdability centrality.
	 
	 \begin{table}[htbp]
	 	\begin{tabular}{l l}
	 		Name& Description \\
	 		\hline
	 		In-Degree Centrality& The number of in-bound edges \\
	 		Eccentricity & The maximum distance from the node\\ & to any other node \\
	 		Closeness Centrality & The sum of the reciprocal of the distance \\ &to each other nodes\\
	 		Betweenness Centrality & The number of shortest paths that pass \\ & through the node divided by the total number \\ & of shortest paths between two nodes \\
	 		Eigenvalue Centrality & For node $i$, the $i$th component of \\ & the dominant eigenvector of \\ & the Adjacency Matrix \\
	 		Katz Centrality\cite{katz1953new} & The weighted sum of all paths, where a path\\ & of length $d$ receives a weight of $\alpha^{d}, \alpha>0$. 
	 	\end{tabular}
	 	\caption{Description of Centrality Measures}
	 	\label{tab:cent}
	 \end{table}
	 
	 Fig. \ref{fig:bestworst} shows that while high herdability centrality nodes tend to have high degree, the highest in-degree node does not necessarily have high herdability centrality. This holds for all centrality measures considered. 
	 In $8$ of the $19$ networks considered the traditional centrality measures overlap with the highest herdability centrality nodes. However, there is no single centrality measure which can be used reliably to select the minimum energy herding node. 
	 The overlap between herdability centrality and existing measures tends to occur in undirected networks. As control energy is related to the symmetry structure of the network \cite{martini2010controllability}, it may be that, in undirected networks, the existing centrality measures provide information about the symmetry structure. 
	 Examining the directed networks shows that size of the network seems to have no impact on overlap. 
	 For example, in the Fifth Grade Friendship network, $N=22$, all considered centrality measures select a node with low herdability centrality.

	 \section{Conclusion}\label{sec:con}
	 
	 This paper considers the application of the notion of herdability to the control of complex, positive networks. Input selection for these networks was shown to be possible in linear time. A novel centrality measure was introduced, which tends to select hubs to drive a system with minimum energy to a desired terminal set, even though hubs are not selected when considering the controllability of the system. It is shown that many centrality measures are not suitable for selecting herding nodes, especially in directed networks. The notion of herdability examines more explicitly the existing assumptions about interacting with complex networks and in doing so helps bring new insight into the control theoretic characterization of complex networks. 
	
\bibliographystyle{IEEEtran}
	\bibliography{herdability}

\end{document}